\newtheorem{example}{Example}
\newtheorem{assum}{Assumption}
\newtheorem{proposition}{Proposition}
\newtheorem{lemma}{Lemma}
\begin{document}

\title{Incentive Schemes for Rollup Validators}

\author{
Akaki Mamageishvili and Edward W. Felten\\ Offchain Labs, Inc.
}

\maketitle

\begin{abstract}
    We design and analyze attention games that incentivize validators to check computation results. We show that no pure strategy Nash equilibrium of the game without outside parties exists by a simple argument. We then proceed to calculate the security of the system in the mixed Nash equilibrium, as a function of the number of validators and their stake sizes. Our results provide lower and upper bounds on the optimal number of validators. More concretely, a minimal feasible number of validators minimizes the probability of failure. The framework also allows to calculate optimum stake sizes, depending on the target function. In the end, we discuss optimal design of rewards by the protocol for validators.
\end{abstract}

\section{Introduction}

Decentralized systems, such as blockchains, rely on validators to compute and verify the state of the system. Validators are rational players that provide a service when the right incentives are in place: they are rewarded for correct behavior and punished for misbehavior. Blockchains have successfully implemented these incentives. The validators stake some amount of (blockchain native) tokens and earn rewards if they provide validation service. If they misbehave, there are two ways they get punished. First, directly, by slashing stakes. Second, indirectly: if the system loses security, it causes the validator stake to be devalued or even stolen.

There are still many research directions to explore. We study a particular setting in this paper, specifically, a setting where a single honest and active validator is necessary for the security of a system, as for example described in~\cite{arbitrum}. We ask a  question exactly how many validators are required for a high enough security level of the system, and what their stake sizes should be. Our model and insights are general enough to be applied to a broad class of decentralized systems, but they are best suited to optimistic rollup protocols. To scale the base layer protocol, computationally heavy execution is delegated from the base layer to layer two protocols, called rollup protocols. In case of optimistic rollups, only claims on intermediate states are made on the base layer. This is where active validation becomes important. If one party, typically one of the validators, wrongly claims a state of the system, other validators are supposed to challenge and prove that the claim is wrong. A single honest validator is enough to disprove a wrong claim, and therefore, guarantee the security of the protocol.

In the first part of the paper, we study a game where validators obtain rewards only when they find false claims made by other (malicious) validators. In the context of optimistic rollups, only one honest validator which checks computation results is enough to detect and prevent malicious attackers' efforts. 
The malicious asserter can make a false claim that allows all assets at the rollup to be stolen if all validators fail to challenge it. This is why long periods for a challenge are given. Successful attacks are subtler on the base blockchains, as they also devalue stolen assets if they are in the native tokens, however, in the rollups the deployed assets are not native and therefore, not devalued on the base layer.  

We first consider a game in which the behavior of players is completely based on economic incentives. In particular, there are no social norms and the payoffs of the players are not affected by outside parties, e.g., the system designer does not inject money into the system. The reward a validator might receive is paid by another validator that made a false claim about the system state. This way the game is similar to the so-called multi-player {\it zero-sum game}. The difference with the zero-sum game, however, is that if the malicious validator succeeds with a false claim, it may obtain a much higher payoff than what the rest of the validators lose.   

In this model, we argue that a pure strategy equilibrium does not exist, and therefore, we focus on mixed strategy equilibrium solution concept. In this solution, we calculate the probability of the system failure, which should serve as an upper bound on the system security expected if there are only rational players. In reality, we expect the system to be more secure because there are altruistic players and some players may also care about their reputation.

In the second part, we allow rewards to validators without detecting false claims. To achieve this, validators are required to post their checking results on-chain with some probability. To be able to figure out if posting is needed, validators need to compute the state of the system fully. If the probability of posting is high enough, validators are incentivized to do computation all the time, hence, pure strategy equilibrium is obtained. They are rewarded when they correctly post about the state of the world, the amount which compensates for their past checking efforts as well. 

\subsection*{Related Literature}
The question of correctly incentivizing validators in blockchain settings is not recent, see~\cite{dem_incentives}, where the problem is referred to as the {\it verifier's dilemma}. 
In parallel research,~\cite{li2023security} studies the security of optimistic rollups in a very similar model as ours. The differences are the absence of deposits of the validators, equivalent to our definition of a silent validator and focusing on 2 players, a representative attacker, an {\it Asserter} in our case, and a representative validator.
A closely related paper to ours is~\cite{DAC}. The difference between our approaches is that in ~\cite{DAC} validators (data providers) are not explicitly rewarded for their service. Instead, the authors look at only punishing strategies to incentivize the right behavior. Also, they consider pure strategy Nash equilibria solutions.~\cite{finding_bug} studies a similar question of how to design a bug-searching committee. The paper studies the question of the optimal number of searchers and reward policies, given a fixed budget.~\cite{reward_sharing} studies blockchain security through the lenses of game theory, in particular, how to design a sharing of validation rewards. However, in this model, the costs of validation differ across validators. 
Another related paper is~\cite{staking_pools}, where costs of validation differ, and the authors look at the problem of delegating validation, a common way of earning through rewards. 
Game theory and security literature have a big intersection. For research relating these two in the context of multi-party computation settings as well, see~\cite{GT_sec_comp}.

\section{Model}

We first consider the simplest case of two validators, which we sometimes refer to as players. One of them is playing the role of the Asserter and the other playing the role of the Checker. This setting corresponds to the system with two validators. Most of the parameters are defined in the game with two players, denoted by $\mathcal{G}_2$ and later used for a more general setting with $n$ players.  
The roles of players are specified in the following: 

\begin{itemize}
    \item The Asserter makes a claim about the state of a system that is either true or false. The state of the system is a result of a computation of a state transition function, that is fed a stream of incoming transactions.
    \item The Checker can check the claim made by the Asserter, at some computational cost. The Checker has the option to challenge the claim.
\end{itemize}

The Asserter's action set consists of false and true claims.
The Checker's action set consists of checking and not checking. 

The following is a list of the parameters of the game that define the final utilities of the players.

\begin{itemize}
    \item  $C$: the cost of checking. Here we refer to the cost of checking each claim about the state of the system. Claims arrive at a regular rate. We ignore the sunk costs of setting up the validator node. After such a node is set up, the validator that runs a node pays only server and maintenance costs. Node software that does a check of the claim is typically available for validators of the rollup systems. Therefore, it is safe to assume that they incur homogenous costs of checking. 
    \item  $R$: the deposit of the Asserter. In case of false claim discovery, $R$ is also the Checker's reward. In practice, only a fraction of the reward is given to a Checker and the rest is burnt. This is done in order to avoid situations where both validators are controlled by the same malicious player. In such a case this malicious player does not lose any deposit by rising a false assertion and losing the challenge to itself~\footnote{Such delay is an attack on its own.}. However, in the analysis, we assume that validators are not controlled by the same player and the mechanism is budget balanced: what one player loses goes to another player. The analysis does not change and results are the same asymptotically (and qualitatively) if we assume burning some fraction of the Asserter's deposit.
    \item  $L$: the Checker's loss if cheating goes undetected. Note that $L$ does not have to be equal to $R$. It can be a stake validators have locked for being a validator. However, by the design of a challenge mechanism, $L$ is at most $R$. 
    \item  $U$: is the Asserter's gain if cheating goes undetected. It can be thought of as the full assets locked in the protocol, sometimes referred to as {\it total value locked (TVL)}. This interpretation of the value is most suited to rollup protocols. 
\end{itemize}

\begin{center}
\begin{tabular}{ | m{4cm} | m{2cm}| m{2cm} | } 
  \hline
  strategy & false ($\pi$) & true ($1-\pi$) \\ 
  \hline
  check ($\alpha$) & $R-C,-R$ & $-C, 0$ \\ 
  \hline
  don't check ($1-\alpha$) & $-L$, $U$ & $0$,$0$ \\ 
  \hline
\end{tabular}
\end{center}

The table above gives a game $\mathcal{G}_2$ in a bimatrix format. The first number is the utility (payoff) of the Checker, the second number is the utility of the Asserter.
The Checker is sometimes referred to as a row player and the Asserter -- a column player.

We make a mild assumption on the parameters: 
\begin{assum}
Assume that the Checker does not have a dominant strategy, that is, $R-C>-L$, equivalent to $C<R+L$. 
\end{assum}

The condition is intuitive: if the cost of checking is too high, the Checker never checks. A common goal is to minimize  the probability of a system failure, denoted by $F(C,L,R,U)$. That is, the probability that a false claim is introduced by the Asserter and the Checker does not check it.
Note that the Asserter does not have a dominant strategy, given the assumption. Since the Checker does not have a dominant strategy, we can only have a totally mixed equilibrium game. A similar observation is made in~\cite{li2023security}. However, in the totally mixed equilibrium solution of~\cite{li2023security}, in our case, because of the assumptions, the probability of a successful attack is very low, but the damage is huge, while in the case of~\cite{li2023security}, both probability and damage are moderate. We assume that if an attack is detected, then the Asserter gets nothing, while in~\cite{li2023security} gives the same $\pi$ payoff in both cases. In the case of detection, the payoff is negative because of a larger deposit slashed, while in our case it is just negated deposit, $-R$.
Second, we assume that in case of a successful attack, the attacker can in principle steal all assets, while~\cite{li2023security} assumes that the $\pi$ is a moderate value. 

We compute probabilities in the mixed equilibrium solution in the following. 
Consider mixed strategies for both players. For the Asserter, strategies can be characterized by the probability $\pi$ that the chosen action (claim) is false. With probability $1-\pi$, the claim is true. 
For the Checker, the mixed strategy is characterized by the probability $\alpha$ that the Checker checks. With probability $1-\alpha$, the Checker does not check. 

\begin{proposition}\label{dep_cost_tvl}
The probability of failure is increasing in the cost of checking $C$ and decreases in $U$.    
\end{proposition}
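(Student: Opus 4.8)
The plan is to compute the mixed-strategy Nash equilibrium of $\mathcal{G}_2$ in closed form, read off the failure probability $F=\pi(1-\alpha)$ as an explicit function of the parameters, and then verify the two monotonicity claims by inspection. First I would pin down the equilibrium mixing probabilities using the indifference conditions of a totally mixed equilibrium, which the text has already argued is the only possible equilibrium type. Since the Checker randomizes, the Asserter must be indifferent between a false and a true claim: equating the Asserter's expected payoff from a false claim, $-\alpha R+(1-\alpha)U$, with that from a true claim, $0$, yields $\alpha=U/(R+U)$ and hence $1-\alpha=R/(R+U)$. Symmetrically, since the Asserter randomizes, the Checker must be indifferent between checking and not checking: equating the expected payoff from checking, $\pi R-C$, with that from not checking, $-\pi L$, gives $\pi=C/(R+L)$.

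Multiplying the two factors yields the closed form
\[
F=\pi(1-\alpha)=\frac{C}{R+L}\cdot\frac{R}{R+U}=\frac{CR}{(R+L)(R+U)}.
\]
From here both assertions are immediate. Holding $R,L,U$ fixed, $F$ is linear and strictly increasing in $C$ with positive slope, since $\partial F/\partial C=R/[(R+L)(R+U)]>0$. Holding $C,R,L$ fixed, $F$ is a positive constant divided by the increasing factor $R+U$, so it is strictly decreasing in $U$, which one can confirm rigorously via $\partial F/\partial U=-CR/[(R+L)(R+U)^2]<0$. The economic reading is natural: a higher checking cost pushes the Checker toward not checking, raising $\pi$; a larger prize $U$ makes false assertion more tempting, forcing the Checker to check more often in equilibrium, lowering $1-\alpha$.

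I do not expect a genuine obstacle once the equilibrium is in hand, so the main point requiring care is establishing that the profile I solve for is actually an admissible (and the unique) equilibrium rather than a degenerate one. Here Assumption~1, $C<R+L$, is exactly what guarantees $\pi\in(0,1)$, while $R,U>0$ automatically place $\alpha$ and $1-\alpha$ in $(0,1)$; without this the ``equilibrium'' could collapse to a pure profile and the monotonicity statement would be vacuous. Verifying this admissibility, and invoking the earlier no-dominant-strategy argument to rule out pure equilibria, is the only step I would state explicitly before differentiating.
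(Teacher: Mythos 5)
Your proposal is correct and follows essentially the same route as the paper: derive $\pi=C/(R+L)$ and $\alpha=U/(R+U)$ from the two indifference conditions, form $F=\pi(1-\alpha)=CR/\bigl((R+L)(R+U)\bigr)$, and read off the monotonicity in $C$ and $U$. Your explicit partial derivatives and the remark that Assumption~1 keeps $\pi\in(0,1)$ are small additions the paper leaves implicit, but the argument is the same.
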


\begin{proof}
 We can now calculate equilibrium probabilities $\alpha$ and $\pi$, using indifference conditions. The indifference condition for the Checker is that the expected utility of playing ''check" is equal to the expected utility of playing ''don't check". That is,

\begin{equation}\label{indifference_checker_2_players}
\pi(R-C)+(1-\pi)(-C)=\pi(-L),    
\end{equation}

equivalent to 

\begin{equation}\label{prob_false_claim}
    \pi=\frac{C}{R+L}.
\end{equation} 

The indifference condition for the Asserter is that the expected utility of playing ''false" is equal to the expected utility of playing ''true". That is, 

\begin{equation}\label{indifference_asserter_2players}
    0\alpha + 0(1-\alpha)=(-R)\alpha + U(1-\alpha),
\end{equation} 
equivalent to 

\begin{equation}\label{alpha_formula}
    \alpha=\frac{U}{R+U}.
\end{equation} 

By plugging in the equilibrium values of $\pi$ and $\alpha$, we obtain:

\begin{equation}\label{failure_prob}
    F(C,L,R,U):=\pi(1-\alpha)=\frac{CR}{(R+L)(R+U)}.
\end{equation} 

It is easy to see that $F(C,L,R,U)$ is decreasing in increasing $U$.

\end{proof}

If the rollup has a higher value, it is less likely to fail in the equilibrium. The explanation is simple: a higher value of the rollup protocol makes it more attractive for a malicious Asserter to try to make a false claim to transfer all value to itself, but this on the other hand gives more motivation to the Checker to check, as it is earning on finding a false claim. In light of this, another value of interest is the expected loss of the system $F(C,L,R,U)U$. This value is increasing in $U$, and converges to $\frac{CR}{R+L}$ as $U$ tends to infinity. 

Note that, by~\eqref{alpha_formula}, $\alpha$ is decreasing in increasing $R$. This sounds counterintuitive -- a larger reward for the checker discovering a false claim makes it less likely that the checker will check for a false claim. 
Even though higher $R$ should increase the incentive of the Checker to check, and all else being equal it does, it also decreases the probability $\pi$ that the Asserter introduces a false claim, which on its own decreases the incentive for the Checker to check. 
That is, the recommendation is that increasing $R$ is not the solution to maximize checking probability. Consider a derivative of $F(C,L,R,U)$ as a function of $R$.

\begin{equation}\label{deriv_sec}
    \frac{dF}{dR}=\frac{C(R+L)(R+U)-CR(R+L+R+U)}{(R+L)^2(R+U)^2}=\frac{CUL-CR^2}{(R+U)^2(R+L)^2}.
\end{equation}

Solving $\frac{dF}{dR}=0$ gives $R=\sqrt{UL}$. Therefore, when $R<R^*:=\sqrt{UL}$, LHS of~\eqref{deriv_sec} is positive and when $R>R^*$, then it is negative. That is, the probability is decreasing in $R$ above $R^*$, and should be taken as high as possible.

Note a few observations on $F$, from the formula~\eqref{failure_prob}. First, $F$ is minimized at $R=0$. It implies that a false claim comes for free, therefore, the Asserter tries a false claim all the time, and the Checker checks all the time, as the cost of checking is less than the punishment for not checking: $C<L$. This would be a desirable solution, but each false claim delays  finality, and therefore, harms the system. Therefore, in optimizing the parameter sets, we also care about $\pi$, which is maximized by taking $R=0$. One immediate takeaway from the~\eqref{prob_false_claim} formula is that if we want to minimize $\pi$, we need to increase $L$ and $R$.

For decreasing $\pi$, we need to increase $L$, equivalent to disincentivizing the Checker to stay idle, causing the Asserter to introduce a false claim less often, and to decrease $R$, equivalent to incentivizing the Asserter to introduce the false claim more often and, therefore, causing the Checker to check more often.

Note that $\alpha=1$ for any $R$ is achievable if the Asserter commits to introduce the false claim with a certain minimum probability: $\pi > \frac{C}{R+L}$. However, this can not be sustained in equilibrium: the Checker always checks as it has positive utility, while the Asserter has strictly negative utility: $-\pi\cdot R$. It is not rational for the Asserter. It can only be supported as a solution if, for example, the protocol designer plays the role of the Asserter and posts wrong claims with a probability more than the bound $\frac{C}{R+L}$.

One potential goal a system designer can have is to optimize social welfare, in which the costs of validating and a fraction of the stakes are subtracted from the success probability times TVL. The first cost is obvious - the cost of (duplicate) checking is lost for the validator. The second, opportunity cost, is incurred by the validators by staking their assets in the validation system instead of earning interest outside. 
The system designer has to minimize the following target function: 

\begin{equation}\label{target}
    M:=fU\pi + U\pi(1-\alpha)+\alpha C + r(L+R), 
\end{equation}

where $f$ denotes the relative loss of the system when there is a delay and $r$ is a potential return on investment outside the system. $f$ is typically assumed to be a low number, say $10^{-2}$. Plugging in the values for $\pi$ and $\alpha$ gives an equivalent equation to~\eqref{target}: 

\begin{equation*}\label{target_new}
    M=f\frac{C}{R+L} + \frac{C}{R+L}\frac{R}{R+U} = \frac{C}{R+L}\left(f+\frac{R}{R+U}\right) + \frac{U}{U+R}C+r(L+R). 
\end{equation*}

The optimum values of $R$ and $L$ minimizing $M$ can be computed by taking partial derivatives of $M$ with respect to $R$ and $L$. 

\subsection{Extension to $n+1$ validators}

In this section, we assume that there are $n+1$ validators. One of the validators is an Asserter, in each round. We want to incentivize the validators to check claims often enough. In each round, the Asserter makes a claim, and validators can check it. If they check and find the false claim, they do not get slashed if they post a challenge to the false claim. If they check and find out that the claim is true, they do not need to post anything.
That is, not posting anything can mean two things: the validator checked and found out the claim is true, or the validator did not check and that is why there is no post. 
There are two ways to implement the payoff to the players in the protocol: 

\begin{enumerate}
    \item Simultaneous: every validator posts, if they want to post, at the end of a predefined time interval. This approach is simpler to analyze, and for the players, it is simpler to make a decision. 
    \item Sequential: The validators see what other validators have done so far. If nobody posts anything this may motivate them not to post anything, but that increases the chances that someone will post a fraud-proof at the last second and slash the silent validators.
\end{enumerate} 

For simplicity, we focus on the simultaneous model in this paper. That is, validators see only at the end of the round how many posted checks. 
Having homogenous costs of checking among validators is a natural assumption in the setting of rollups, as there is available software for running a validator node and standard hardware requirements.

If no validator detects a false claim, the Asserter proceeds with the false claim, and all validators are punished by losing all their deposited stake -- $L$ -- and the Asserter can steal all value on the chain, giving it payoff $U$. 

We consider a fully mixed symmetric equilibrium of this game. 
Similarly to the case with two players, the probability that each validator checks is denoted by $\alpha$, and the probability the Asserter claims a false claim is denoted by $\pi$.
The timeline of the events is the following: 

\begin{itemize}
    \item If $m$ out of $n$ validators find a false claim and post about it, they are paid equally: $\frac{R}{m}$.
    \item the other $n-m$ validators are slashed $s_w$, which we assume to be (much) smaller than $L$. 
\end{itemize}

The probability that at least one out of $n$ validators will check is equal to $$P_{s,\alpha}(n):=1-(1-\alpha)^n.$$  Note that in this definition $\alpha$ is an independent parameter, however, in the equilibrium it depends on $n$. 
Similarly to~\eqref{indifference_checker_2_players}, we derive the indifference condition for the validator. Shortly, it is EU[check] = EU[don't check], where EU[X] stands for expected utility from taking an action $X$. The condition can be translated as:

\begin{align}\label{indifference_checker_n_players}
    & \pi\left(\sum_{i=0}^{n-1}{n-1\choose i}\alpha^i(1-\alpha)^{n-1-i}\frac{R}{i+1}\right) + (1-\pi)0 - C = \\
    &\pi(P_{s,\alpha}(n-1)(-s_w) + (1-P_{s,\alpha}(n-1))(-L)) + (1-\pi) 0.
\end{align}

The first summand on the left-hand side (LHS) represents the product of the probability that the claim is false, the probability that $i$ other validators check and (expected) rewards $R/(i+1)$, as there are $i+1$ validators finding the false claim.

The second summand on the LHS is a product of the probability that the claim is true with $0$, while the last represents the minus cost of checking. 

The first summand of the right-hand side (RHS) represents the product of the probability that the claim is false, the product of the probability that someone else checks with $-s_w$.

The second summand of RHS is a product of the probability that nobody checks with $-L$, while the third summand is a product of the probability that the claim is true with $0$.

The indifference condition can be further simplified to: 

\begin{align}\label{indifference_checker_n_players_simplified}
    &\pi\left(\sum_{i=0}^{n-1}{n-1 \choose i}\alpha^i(1-\alpha)^{n-1-i}\frac{R}{i+1}\right) - C = \nonumber \\
    &\pi(P_{s,\alpha}(n-1)(-s_w) + (1-P_{s,\alpha}(n-1))(-L)).
\end{align}

First, the following claim, obtained in~\cite{finding_bug}, holds. 

\begin{lemma} \label{lemma:modbinom}
For $n \in \mathbb{N}$ and $x \neq 0$,
\begin{equation}\sum_{k=0}^n {n \choose k} \frac{x^ky^{n-k}}{k+1} = \frac{1}{n+1}\frac{(x+y)^{n+1}-y^{n+1}}{x}.
\end{equation}
\end{lemma}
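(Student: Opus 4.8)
The plan is to prove this as a consequence of the binomial theorem combined with a single integration, treating $x$ and $y$ as real variables. The starting point is the identity $\sum_{k=0}^n \binom{n}{k} t^k y^{n-k} = (t+y)^n$, which holds for every real $t$. I would then integrate both sides of this identity with respect to $t$ over the interval $[0,x]$.

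Integrating the left-hand side term by term gives $\sum_{k=0}^n \binom{n}{k} \frac{x^{k+1}}{k+1} y^{n-k}$, since $\int_0^x t^k\,dt = \frac{x^{k+1}}{k+1}$. Integrating the right-hand side gives $\int_0^x (t+y)^n\,dt = \frac{(x+y)^{n+1} - y^{n+1}}{n+1}$; the boundary term $-y^{n+1}$ is precisely what appears in the claimed formula and arises from evaluating the antiderivative at the lower limit $t=0$. Equating the two expressions and dividing through by $x$ — which is legitimate exactly because of the hypothesis $x \neq 0$ — yields the stated identity after noting that $x^{k+1}/x = x^k$.

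An alternative, purely algebraic route avoids calculus and makes clear that the identity is formal. One uses the absorption identity $\frac{1}{k+1}\binom{n}{k} = \frac{1}{n+1}\binom{n+1}{k+1}$, reindexes the sum by $j = k+1$, and pulls out a factor of $1/x$, so that the remaining sum $\sum_{j=1}^{n+1}\binom{n+1}{j} x^j y^{n+1-j}$ completes to the full binomial expansion of $(x+y)^{n+1}$ once the missing $j=0$ term $y^{n+1}$ is added and subtracted. This version has the advantage of holding over any commutative ring in which $x$ is invertible.

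There is no substantial obstacle here; the only points requiring care are the division by $x$ (hence the explicit hypothesis $x \neq 0$) and tracking the lower-limit boundary term $-y^{n+1}$ correctly. Since the result is attributed to~\cite{finding_bug}, I would present the short integration argument as a self-contained justification rather than merely cite it.
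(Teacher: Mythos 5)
Your proposal is correct. Note that the paper itself gives no proof of this lemma at all --- it only states it as "obtained in~\cite{finding_bug}" --- so there is no in-paper argument to compare against; either of your two routes would serve as a self-contained justification. The integration argument is sound: integrating $\sum_{k=0}^n \binom{n}{k} t^k y^{n-k} = (t+y)^n$ over $t \in [0,x]$ gives $\sum_{k=0}^n \binom{n}{k} \frac{x^{k+1}}{k+1} y^{n-k} = \frac{(x+y)^{n+1}-y^{n+1}}{n+1}$, and dividing by $x \neq 0$ yields the claim. Your algebraic alternative via the absorption identity $\frac{1}{k+1}\binom{n}{k} = \frac{1}{n+1}\binom{n+1}{k+1}$ and the reindexing $j = k+1$ is equally valid and, as you say, is the better choice here since it is purely formal and makes the role of the hypothesis $x \neq 0$ transparent (it is needed only to divide by $x$, not for any analytic reason). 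Either version would be an appropriate addition to the paper.
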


%\begin{proof}[Proof of \ref{lemma:modbinom}] Note that ${n \choose k} \frac{1}{k+1} = {n+1 \choose k+1}\frac{1}{n+1}$.
%Then,
%\begin{align*} \sum_{k=0}^n {n \choose k} \frac{x^k y^{n-k}}{k+1} &= \frac{1}{n+1} \sum_{k=0}^n {n+1 \choose k+1} x^k y^{n-k} = \frac{1}{n+1} \frac{1}{x} \sum_{k=0}^{n} {n+1 \choose k+1} x^{k+1} y^{(n+1)-(k+1)}
%\\ &= \frac{1}{n+1} \frac{1}{x} \left[ \sum_{k=0}^{n+1} {n+1 \choose k} x^{k} y^{(n+1)-k)} - y^{n+1} \right] 
%\\ &= \frac{1}{n+1} \frac{(x+y)^{n+1} - y^{n+1}}{x}.
%\end{align*}
%\end{proof}

The lemma implies simplification: 

\begin{equation}\label{indif_simplified}
    \sum_{i=0}^{n-1}{n-1 \choose i}\alpha^i(1-\alpha)^{n-1-i}\frac{R}{i+1} = R\frac{1-(1-\alpha)^n}{n\alpha}.
\end{equation}

Similarly to~\eqref{indifference_asserter_2players}, the indifference condition of the Asserter is EU[false claim] = EU[true claim]. 
The LHS is the sum of the product of the probability that some validator checks with $-R$ and the product of the probability that no validator checks with $U$. The RHS is equal to $0$. 
The condition simplifies to: 

\begin{equation}\label{indifference_asserter_nplayers}
    (1-(1-\alpha)^n)R = (1-\alpha)^nU.
\end{equation}

From the condition, we obtain a solution for $\alpha$: $$\alpha = 1-\frac{R}{R+U}^{1/n}.$$

RHS of~\eqref{indif_simplified} further simplifies to: 

\begin{equation*}
R\frac{1-(1-\alpha)^n}{n\alpha} =\frac{UR}{(R+U)n\alpha}.    
\end{equation*}

Plugging in $\alpha$ in the equation~\eqref{indifference_checker_n_players_simplified} and taking into account~\eqref{indif_simplified} derives $\pi$: 

\begin{equation}
    \pi = \frac{C}{\frac{UR}{(U+R)n\alpha} + P_{s,\alpha}(n-1)s_w + (1-P_{s,\alpha}(n-1))L}.
\end{equation}

The main value of interest as in the case with 2 players is $P_{s,\alpha}(n)$. It is obtained by solving the Asserter’s indifference condition~\eqref{indifference_asserter_nplayers} and is equal to $\frac{U}{R+U}$. That is, it does not depend on the number of validators $n$. The second value of interest is $\pi$, as $\pi(1-P_{s,\alpha}(n))$ is the probability that the false claim will go through unnoticed. Note that $\pi$ is decreasing in increasing $s_w$ and $L$. That is, for decreasing the probability that the Asserter is introducing the false claim, and therefore, decreasing the total probability the false claim goes through unnoticed, we need to increase the slashed stakes of the validators. All else being equal, we obtain the following result: 

\begin{proposition}
   $\pi$ is increasing with increasing $n$.
\end{proposition}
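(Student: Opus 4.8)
The plan is to write $\pi = C/D(n)$, where $D(n)$ is the denominator in the displayed formula for $\pi$. Since $C>0$ is held fixed (all other parameters equal), $\pi$ is increasing in $n$ if and only if $D(n)$ is strictly decreasing in $n$, so the whole task reduces to analyzing the monotonicity of $D$. The first step is to eliminate $\alpha$ in favour of $n$ using the Asserter's indifference condition~\eqref{indifference_asserter_nplayers}. Writing $\rho := R/(R+U)\in(0,1)$, the equilibrium value $\alpha = 1-\rho^{1/n}$ gives $1-\alpha=\rho^{1/n}$, hence $n\alpha = n\bigl(1-\rho^{1/n}\bigr)$, $P_{s,\alpha}(n-1)=1-\rho^{(n-1)/n}$ and $1-P_{s,\alpha}(n-1)=\rho^{1-1/n}$. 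Substituting these, $D(n)$ becomes a function of $n$ and the fixed parameters alone:
\[
D(n)=\frac{UR}{(U+R)\,n\bigl(1-\rho^{1/n}\bigr)}+s_w+\rho^{1-1/n}\,(L-s_w).
\]

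I would then treat the contributions separately. The $s_w$ and $L$ terms collapse to $s_w+\rho^{1-1/n}(L-s_w)$; since $0<\rho<1$ the exponent $1-1/n$ increases toward $1$ as $n$ grows, so $\rho^{1-1/n}$ strictly decreases, and because $L-s_w>0$ (the assumption $s_w\ll L$) this block is strictly decreasing in $n$. The remaining work is the first term, a positive constant $UR/(U+R)$ divided by $g(n):=n\bigl(1-\rho^{1/n}\bigr)$; it is decreasing in $n$ exactly when $g$ is increasing in $n$.

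The main obstacle is establishing that $g(n)=n\bigl(1-\rho^{1/n}\bigr)$ is increasing. I would treat $n$ as a continuous variable (monotonicity on the reals yields it on the integers) and substitute $t=1/n$, so that $g=h(t):=(1-\rho^{t})/t$ with $t\to 0^{+}$ as $n\to\infty$. Setting $c:=\ln\rho<0$ and $u:=ct$, a direct differentiation gives $h'(t)=\bigl(e^{u}(1-u)-1\bigr)/t^{2}$, so the sign of $h'$ is that of $\phi(u):=e^{u}(1-u)-1$. Since $\phi(0)=0$ and $\phi'(u)=-u\,e^{u}>0$ for $u<0$, we get $\phi(u)<0$ for all $u<0$, i.e.\ $h'(t)<0$. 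Thus $h$ is strictly decreasing in $t$, and because $t=1/n$ decreases in $n$, $g(n)=h(1/n)$ is strictly increasing in $n$. Combining the three pieces, $D(n)$ is strictly decreasing, so $\pi=C/D(n)$ is increasing in $n$, as claimed. As a sanity check, at $n=1$ one has $g(1)=\alpha=U/(R+U)$, making the first term equal to $R$, which matches the two-player reward $R/(i+1)$ at $i=0$.
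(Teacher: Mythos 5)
Your proof is correct, and it is worth noting that the paper itself states this proposition without any proof at all --- the text moves straight from the displayed formula for $\pi$ to the proposition and then to its consequences. So there is no argument in the paper to compare against; what you have written is a genuine, complete derivation of the claimed monotonicity. Your decomposition of the denominator $D(n)$ into the reward term $\frac{UR}{(U+R)\,n(1-\rho^{1/n})}$ and the slashing block $s_w+\rho^{1-1/n}(L-s_w)$ is the natural one, and both pieces are handled correctly: the slashing block decreases because $\rho^{1-1/n}$ decreases in $n$ for $\rho\in(0,1)$ and the paper explicitly assumes $s_w\ll L$ so that $L-s_w>0$; the reward term decreases because $g(n)=n\bigl(1-\rho^{1/n}\bigr)$ increases, which you establish cleanly via the substitution $t=1/n$ and the elementary inequality $e^{u}(1-u)<1$ for $u<0$ (itself verified by the sign of $\phi'(u)=-ue^{u}$). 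The passage from real to integer $n$ is immediate, and your sanity check that the reward term equals $R$ at $n=1$ correctly recovers the two-player indifference condition. The only presentational caveat is that you should state explicitly at the outset that $L>s_w$ and $\rho\in(0,1)$ are being used as standing assumptions from the model; with that said, the argument is complete and supplies exactly the justification the paper omits.
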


This result helps to find out the optimum number of validators, in particular, it suggests to rollup systems that $n$ should be as low as possible. On the other hand, $n>1$ might still be needed because, for example, some validators are not online for some time.

Suppose $t$ validators go offline for a technical reason. We calculate the probability that the system still functions in equilibrium. It is equal to $P_{s,\alpha}(n-t)$. In the following, we give a numerical example.

\begin{example}
In this example, we consider realistic values of parameters. Suppose $C$ is normalized to $1$ (dollar), which is a reasonable approximation of one round computation costs, $U=10^9$ corresponding to the TVL, $R=10^6$ corresponding to a stake that an asserting validator needs to commit and $L=10^5$ corresponding to a stake an active validator needs to commit. 
The probability of failure is minimized when $n=1$ and it is equal to $\frac{10^6}{(10^6+10^5)(10^9+10^6)}\approx 10^{-9}$. Now assume that $n=12$. Then the probability each validator checks, $\alpha$, is approximately equal to $0.4377$, and the probability of a false claim is approximately equal to $3.448\cdot 10^{-6}$ and the probability of failure equals to $3.445\cdot 10^{-6}$. When $t=2$, the probability of failure equals to $0.343\cdot 10^{-6}$. The driving force of these (good) results is that $\pi$ is very low. The other multiplier, the probability that one of the validators will check, has a lower effect on the result. The following table shows approximate values of $\alpha$ and $\pi$ for $n\leq 12$. 
Note that the probability all validators will fail to catch a false claim, in that case, is $(1-\alpha)^n = \frac{R}{U+R}$, independent of $n$, and approximately equals to $10^{-3}$.

\begin{centering}
\begin{table}
\begin{tabular}{ |c|c|c|c|c| c| c| c| c| c| c| c| c|}
\hline
 $n$ & $1$ & $2$ & $3$ & $4$ & $5$ & $6$ & $7$ & $8$ & $9$ & $10$ & $11$ & $12$ \\ 
 \hline
 $\alpha$ & $0.999$ &  $0.968$ & $0.900$ & $0.822$ & $0.748$ & $0.683$ & $0.627$ & $0.578$ & $0.535$ & $0.498$ & $0.466$ & $0.437$ \\  
 \hline
 $\pi$   & 9.1e-7 & 1.9e-6 & 2.7e-6 & 3.3e-6 & 3.7e-6 & 4.1e-6 & 4.4e-6 & 4.6e-6 & 4.8e-6 & 5e-6 & 5.1e-6 & 5.3e-6 \\
\hline
\end{tabular}
\caption{Equilibrium probabilities as a function of $n$.}
\end{table}
\end{centering}

\end{example}

\subsection{Silent validators}

In this section, we assume the existence of ''silent" validators. They do not stake anything, unlike (active) validators considered so far, but can access the base layer contract after each claim and challenge the (false) claim of the Asserter, in case active validators did not do so~\footnote{The role of a silent validator can be played by the Asserter as well. That is, instead of stealing all the assets in the system, it may only collect $nL$ and allow the system to survive.}. 
A successful claim by a silent validator allows it to collect all stakes -- $nL$ and the Asserter's deposit $R$. This gives more incentive to the staked validators to check. 

To get an intuition, we start with the smallest instance. Assume that there is one active and one silent validator. The indifference condition of the active validator stays the same as in the case without silent validators, as the active validator loses its deposit if the Asserter's claim is false. The indifference condition of the silent validator, on the other hand, is:

\begin{equation}\label{silent_2_validators}
\pi (1-\alpha)R=C.   
\end{equation}

The expected gains for the silent validator is $(1-\alpha)R$, as the active validator finds the false claim with probability $\alpha$.

Plugging in $\pi=\frac{C}{R+L}$ in~\eqref{silent_2_validators} gives a contradiction, the LHS is always lower than the RHS. This implies that the silent validator never checks in the equilibrium. The same holds even if we add the active validator's deposit to the reward of a silent validator. The indifference condition in this case becomes: 

\begin{equation}\label{silent_2_validators_extra}
\pi (1-\alpha)(R+L)=C.   
\end{equation}

However, the mechanics change when we consider more than $1$ active validator. 
Suppose there are $m$ silent validators. 
The indifference condition of such a validator is different from the active validator, as it does not risk losing stake if it does not check. On the other hand, if the silent validator checks while no staked validator does, it will be rewarded both by staked validator stakes -- $nL$ -- and the dishonest Asserter's stake $R$. Silent validators have the same cost of checking, $C$. By a similar argument as with only active validators, it is easy to show that there is no pure Nash equilibrium solution to the game. 
The proof is by contradiction: if one of the validator types checked with  certainty, it would make malicious Asserter not make a false claim, causing validators not to check. Therefore, we again consider a totally mixed Nash equilibrium solution. The probability that the silent validator plays the checking strategy is $\beta$. Then, the indifference condition for the silent validator is:  

\begin{equation}\label{indif_silent}
    C = \pi(1-(1-\alpha)^n)\sum_{i=0}^{m-1}\left({m-1 \choose i}\beta^i(1-\beta^{m-1-i})\frac{R+nL}{i+1}\right). 
\end{equation}

The indifference condition for active validator is the same as~\eqref{indifference_checker_n_players_simplified}, as for this type of validator it does not matter what silent validators will do. If there is a false claim and active validators do not find it, they will lose all their stakes. For completeness, we state the condition here: 

\begin{equation}\label{indif_active}
    \pi\left(\sum_{i=0}^{n-1}{n-1 \choose i}\alpha^i(1-\alpha)^{n-1-i}\frac{R}{i+1}\right) - C = \pi(P_{s,\alpha}(n-1)(-s_w) + (1-P_{s,\alpha}(n-1))(-L)).
\end{equation}

The indifference condition for the malicious Asserter is: 

\begin{equation}\label{indif_asserter}
   (1-(1-\alpha)^n)(1-(1-\beta)^m)R=((1-\alpha)^n+(1-\beta)^m-(1-\alpha)^n(1-\beta)^m)U. 
\end{equation}

Analyzing indifference conditions~\eqref{indif_silent},~\eqref{indif_active} and~\eqref{indif_asserter} gives conditions on the parameters when totally mixed equilibrium of the game exists. Consider, for example, $m=2$ and $n=1$. The indifference conditions become: 

\begin{equation*}
C=\pi \alpha (R+2L), C=\pi (R+L), (2\alpha - \alpha^2)\beta R = (1-(2\alpha - \alpha^2)\beta)U. 
\end{equation*}

This solves $\alpha = \frac{R+L}{R+2L}$, $\beta = \frac{U}{(2\alpha-\alpha^2)(R+U)}$ and $\pi=\frac{C}{R+L}$. That is, $\alpha$ needs to be high enough, to make sure that $\beta$ is smaller than $1$. When $\alpha$ is low enough, then it must be that $\beta=1$. That is, silent validators always check.

\section{Protocol level incentives}

In this section, we ask the question of how to reward validators for checking (and posting about) the true claim~\footnote{For a similar discussion for Ethereum validator incentivization see~\text{https://dankradfeist.de/ethereum/2021/09/30/proofs-of-custody.html}.}. The post, a transaction to a smart contract at the base layer network does not need to include proof. This approach allows obtaining a pure strategy equilibrium, in which all validators check with certainty. This guarantees that a false claim is found with probability one. However, it comes at the cost of adding new functionality to the protocol, which is usually not desirable.  

Similar to the previous section, there are $n+1$ validators. The probability that each one needs to post about checking the state of the world is denoted by $P$. Validators need to take a decision whether to check or not before they find out whether they need to post about the result. In case they fail to post when they are required to post, they have slashed their stakes $L$. The cost of checking is $C$, as before. The cost of posting is $c$, which is typically assumed to be less than $C$. The opportunity cost of staking on the platform is $rL$ in each round. Therefore, $r$ denotes the return on investment outside the system in one round. The payment validators receive for posting the right outcome is denoted by $p$. Then, the expected payoff is equal to $-C+P(-c+p)$ when the validator checks, or $P(-L)$ when the validator does not check. To guarantee that the validator checks in equilibrium, the expected payoff of checking needs to be larger than the expected payoff of not checking: $$-C+P(-c+p)>P(-L).$$ This gives a condition on $P$, namely $P > \frac{C}{p-c+L}=:\pi_l$. 

The expected budget of the protocol per round is equal to $nP p$, which is lower bounded by $n\pi_lp=\frac{nCp}{p-c+L}$. Note that taking a high enough $L$ lowers the expected cost of the system to guarantee {\it incentive compatibility (IC)}, but it increases the cost to guarantee {\it individual rationality (IR)}. The latter means that validators want to be a part of the system in the first place, instead of staying away and obtaining zero utility. To guarantee IR, we need to offset the opportunity cost the validator incurs by staking $L$, which equals $rL$. Since by IC, the validator always checks, we need that $-C+P(-c +p)>rL$, that is, $P>\frac{C+rL}{p-c}=:\pi_r$. This simplifies to the condition that $\pi$ needs to be larger than $\max(\pi_c,\pi_r)$. The minimum value of $\pi$ is achieved at the minimax. One function is decreasing in $L$, another is increasing. The minimax is achieved when they are equal. That is, $$\frac{C+rL}{p-c} = \frac{C}{p-c+L},$$ implying $L = \frac{rc-rp-C}{r}$ (which is negative, therefore, not possible) or $L=0$.  This optimization is done for fixed $p$. We can minimize $P$ and $P p$ over $p$ as well. Plugging in $L=0$ into the formula of $P$ gives: $P=\frac{C}{p-c}$. It is minimized for $p$ as large as possible. Similarly, $pP=\frac{Cp}{p-c}$ is minimized for $p$ as large as possible. The value approaches $C$, which is intuitive: 

\begin{enumerate}
    \item the system pays exactly the cost of checking on average,
    \item it checks the validators with very low probability,
    \item when they are checked - the system pays a very large amount $p$.
\end{enumerate} 

Unless there is some cost associated with high payment $p$ for the protocol to upper bound it, this is an optimal strategy. However, such costs are obvious. The protocol cannot invest an arbitrarily high amount at once in rewarding validators. 

\subsection*{Implementation}

We present the implementation of attaching a message of checking and posting with some probability to an assertion. The sampling can be done on the protocol level, by referring to state-relevant hash values. 
Suppose the Asserter is making a claim about the value of $f(x)$ for some function $f$ which is common knowledge, and a value $x$ which varies across different runs of the protocol.  We want to pose a randomly generated challenge to the validator, such that the checker must know $f(x)$ in order to respond correctly to the challenge. Then we can punish the validator for responding incorrectly.

The validator has a private key $k$, with a corresponding public key $g^k$ which is common knowledge ($g$ is a suitable generator of a group where the Diffie-Hellman problem is hard).
To issue a challenge for the computation of $f(x)$, the asserter generates a random value $r$, then publishes $(x, g^r)$ as a challenge.
A validator who has private key $k$ should respond to the challenge by posting a tiny transaction on-chain if and only if $H(g^{rk}, f(x)) < T$, where $H$ is a suitable hash function and $T$ is a suitably chosen threshold value.

Note that only the Asserter (who knows $r$) and the validator (who knows $k$) will be able to compute the hash because they are the only two parties who can compute $g^{rk}$. Note also that computing the hash requires knowledge of $f(x)$.
After the validator(s) have had a window of time to post their response(s) to the challenge, the Asserter can post its claimed $f(x)$ which will be subject to challenge if any validator disagrees with it. 

At this time, the Asserter can accuse any validator who responded incorrectly: the Asserter must publish $r$ to substantiate its accusation. If the Asserter's claimed value of $f(x)$ is later confirmed, a smart contract can verify the accusation and punish the misbehaving validator (if the Asserter's claimed $f(x)$ is rejected, the Asserter's accusation is ignored).

If any funds are seized from validators, the Asserter gets half of the seized funds and the remainder is burned.
One way to build this in the rollup is to have assertions, rather than revealing the state root $f(x)$. Instead, include an attention challenge $(x, g^r, H(x, g^r, f(x)))$, which is also a binding commitment to $f(x)$, and only reveal $f(x)$ when there is a challenge, or when the assertion is confirmed.  
Validators could self-identify and stake, and they would have until the confirmation time of the assertion to post their response to the attention challenge. 
 
\section{Conclusions and future work} 

We initiate a study of the optimal number of validators and their stake sizes in the rollup protocols. The main result is that for higher system security guarantees, the cost of checking should be low, TVL should be high and the number of validators should be as low as possible. We also derive optimal validation and assertion deposits in the equilibrium. 
Future avenues of research include weighted staking. Even if such staking is not allowed, if one validator creates multiple identities, but checks only once, it results in weighted staking. Such validator's indifference condition is different from the others, as it has invested $kL$ tokens, where $k$ is the number of copies it created. 

\bibliography{sample}
\bibliographystyle{plain}

\end{document}